\def\BibTeX{{\rm B\kern-.05em{\sc i\kern-.025em b}\kern-.08em
    T\kern-.1667em\lower.7ex\hbox{E}\kern-.125emX}}
\newtheorem{prop}{Proposition}
\newcommand{\qedsymbol}{\rule{0.7em}{0.7em}}
\newcommand{\xhdr}[1]{\vspace{0.1mm}\noindent{{\bf #1.}}}
\renewcommand{\a}{\alpha }
\newcommand{\sign}{\mathop{\bf sign}}
\newcommand{\vect}[1]{\boldsymbol{#1}}
\newcommand{\pt}{\forall }
\newcommand{\q}{\quad}
\newcommand{\oo}{\infty}
\newcommand{\R}{{\mathbb R}}
\newcommand{\pro}{\begin{proposition}}
\begin{document}
    
\title{Convex Hierarchical Clustering for Graph-Structured Data
}

\author{\IEEEauthorblockN{Claire Donnat}
\IEEEauthorblockA{\textit{Department of Statistics} \\
\textit{Stanford University}\\
Stanford, CA, USA \\
cdonnat@stanford.edu}
\and
\IEEEauthorblockN{Susan Holmes}
\IEEEauthorblockA{\textit{Department of Statistics} \\
\textit{Stanford University}\\
Stanford, CA, USA \\
susan@stat.stanford.edu}
}

\maketitle




\begin{abstract}
Convex clustering \cite{chi2017convex} is a recent stable alternative to hierarchical clustering. It formulates the recovery of progressively coalescing clusters as a regularized convex problem. While convex clustering was originally designed for handling Euclidean distances between data points, in a growing number of applications, the data is directly characterized by a similarity matrix or weighted graph. In this paper, we  extend the robust hierarchical clustering approach to these broader classes of similarities. Having defined an appropriate convex objective, the crux of this adaptation lies in our ability to provide: (a) an efficient recovery of the regularization path and (b) an empirical  demonstration of the use of our method.  We address the first challenge through a proximal dual algorithm, for which we characterize both the theoretical efficiency as well as the empirical performance on a set of experiments. Finally, we highlight the potential of our method by showing its application to several real-life datasets, thus providing a natural extension to the current scope of applications of convex clustering.
\end{abstract}

\section{Introduction and related work}


From gene sequencing to biomedical studies \cite{chipman2005hybrid,corpet1988multiple,girvan2002community,segal2002probabilistic}, hierarchical clustering \cite{day1984efficient,johnson1967hierarchical} is currently one of the most widely-used procedures for data analysis. The dendrogram that it provides yields a complete summary of the data and bypasses the need to prespecify an adequate number of clusters. The visualization of these clusters' progressive coalescence provides a comprehensive and intuitive view of their similarities. However, hierarchical clustering is an inherently greedy procedure, which typically constructs the clusters' fusion path by iteratively aggregating (or splitting) clusters. The recovered coalescence path is dependent on the choice of the linkage function, and has also been shown to be highly sensitive to outliers and perturbations of the dataset--- thus allowing the formation of spurious clusters and consequently hindering the generalizability of the analysis \cite{chen2015convex}. This is particularly problematic in applications where such multiscale representations of the data are compared, contrasted and analyzed. Such is the case in brain connectomics, where a topic of interest  is the comparison of the multiscale representations of the network woven by white matter tracts across different people or groups. In such noisy regimes, the definition of a robust and optimal hierarchical clustering takes on a particular importance.  \\
\vspace{-0.3cm}

\xhdr{Convex clustering} To overcome these issues, convex clustering \cite{chen2015convex,hocking2011clusterpath,pelckmans2005convex} is a recent alternative formulation of hierarchical clustering as the solution of a convex optimization problem with a regularization penalty. In its original form, denoting each observation $i$ by its corresponding vector $X_i \in \R^{d} (\text{with } i=1 \cdots N$), and introducing $U_i \in \R^{d}$ the centroid of the cluster associated to point $i$, convex clustering solves the following objective:
\vspace{-2mm}
\begin{equation}\label{eq:conv_clust}
	\underset{U \in \R^{d \times N }}{\text{argmin}}\sum_{i=1}^N || X_i -U_i||^2 +\lambda \sum_{i,j=1}^N W_{ij} \text{Pen}(U_i -U_j)
\end{equation}
where $W_{ij}$ are coupling weights (typically chosen as the $k$-nearest neighbors of each observation). In the previous expression, $\textit{Pen}$ is a penalty function (typically the $\ell_q$-norm, with $q \geq 1$) which encourages coupled observations to share the same centroid. The solution path $U^{(\lambda)}$ is comparable to the coalescence path recovered by hierarchical clustering, and the regularization parameter $\lambda$, to the different levels in the hierarchical clustering dendrogram\cite{chen2015convex,hocking2011clusterpath}:
\begin{itemize}
\itemsep0em
	\item for $\lambda=0$, the solution of Eq. \ref{eq:conv_clust} is $U^{(0)}=X$, and each point belongs its own cluster.
	\item as $\lambda$ increases, the penalty term induces the centroids $U_i$ to fuse, until in the limit, these centroids reach a consensus value, thus forming a single cluster $U^{(\oo)}: \forall i,j \in \{1, N\}, \q U_i^{(\oo)}=U_j^{(\oo)}$.
\end{itemize}
The strict convexity of the objective function of Problem \ref{eq:conv_clust} guarantees the existence of a globally optimal solution, as well as its robustness against perturbations \cite{chen2015convex}, thus making this convex formulation an extremely appealing alternative to hierarchical clustering. We  refer the reader to \cite{tan2015statistical} for a thorough review of the properties of convex clustering as well as a formal analysis of its parallel with hierarchical clustering.\\
\vspace{-3mm}

\xhdr{Contributions and related work} One of the main drawbacks of convex clustering is that the optimization procedure associated to Problem \ref{eq:conv_clust} is computationally more involved than the greedy optimization performed by standard hierarchical clustering. While some work has already been put into the design of efficient solutions \cite{chi2017convex}, to the best of our knowledge, the derivation of algorithms for convex clustering has been restricted to the setting where data are Euclidean: observations are represented by vectors in $\R^d$, and similarities are simply characterized through pairwise Euclidean distances. However, in an increasing number of applications, such representations are difficult to obtain and the data come more readily as a graph in which nodes represent observations, and edges reflect some function of similarities between data points. In connectomics for instance, correlations between brain regions are summarized by a weighted graph, which provides a more amenable support to the study of functional connectivity \cite{kelly2011reduced,tetreault2016brain}. Similarly, in social sciences, relationships between individuals are readily modeled by a graph, where edges denote interactions between users.  
In many of these graph-structured datasets, hierarchical clustering is an indispensable tool since it allows the analysis of the data at different scales. The derivation of a convex multiscale summary of the data with the same global optimum and robustness guarantees as its Euclidean counterpart thus represents an impactful problem with many applications --  a challenge which we propose to tackle in this paper. Our contributions consist in (a) the adaptation of the convex objective posed in Problem \ref{eq:conv_clust} to the graph setting and (b) the derivation of two provably efficient solutions. We analyze and validate our method through a set of synthetic experiments, and show its application on several real-world datasets.

\section{Problem Statement}

Throughout this paper, we assume that the data comes under the form of a weighted similarity matrix $K$ between $N$ elements (i.e, for instance, the adjacency matrix or diffusion map associated to a graph), which we assume to be sparse. We adopt the standard convention of referring to the $i^{th}$ column of any given matrix $M$ as $M_i$.

\xhdr{Positive Definite Symmetric Input Matrices} We begin by studying the case where the similarity matrix $K$ is symmetric and Positive Definite.  By direct application of the spectral lemma, we  can re-write $K$ as a dot-product in a higher-dimensional space: $K = \Phi^T \Phi \hspace{1mm} \text{ i.e. } \forall i, j, \quad  K_{ij}=\Phi(X_i)^T \Phi(X_j)$. This provides an amenable setting for  the generalization of convex clustering, where the goal becomes to recover the centroids $U_i$ associated to each implicit high-dimensional vector $\Phi(X_i)$. Since each centroid lies in the convex hull of its corresponding vectors, we require $U$ to have the form:
\vspace{-0.3cm} 
\begin{multline}\label{eq:u} 
U =\Phi(X) \pi,  \hspace{1mm} \text{ where }\hspace{1mm}  \pi \vect{1}=\vect{1},  \quad  \vect{1}^T\pi =\vect{1}^T  \text{ and } \pi \geq 0
\end{multline}
In this setting, the doubly-stochastic matrix $\pi$ benefits from a bi-dimensional interpretation: the columns correspond to the centroids' representation using the original observations as dictionary, while the rows can be interpreted as soft membership assignments of observations to clusters. \textit{However, we highlight that this constraint further adds to the complexity of the original convex clustering algorithm, and is non-trivial to implement.}\\
Using the kernel trick,  Eq. \ref{eq:conv_clust} can be adapted here to:
\begin{equation} \label{eq:sim_CC}
\begin{split}
\hspace{-0.3cm} \underset{\pi _\in \Delta_N}{\text{argmin } } \text{Tr} [ \pi^T K \pi  -2 K  \pi ] + \lambda \sum_{i,j}K_{ij}\text{Pen}(\pi_{i} -\pi_{ j})\\
\end{split}	
\end{equation}
where $\Delta_N=\Big\{\pi \in \mathbb{R}^{N \times  N}:  \pi \vect{1}=\vect{1},  \vect{1}^T\pi =\vect{1}^T, \pi \geq 0 \Big\}$ is the set of doubly stochastic matrices.
A full derivation of this formulation can be found in \ref{appendix:further_proofs}. The next important step consists in choosing the coupling penalty, which we take here to be a mixed total-variation penalty:
\vspace{-0.3cm} 
\begin{multline*}   \text{Pen}\big(\pi_{\cdot i} -\pi_{\cdot j}\big) =\alpha || \pi_i -\pi_j||_{2,1} +(1-\alpha) ||\pi_i -\pi_j||_1.  
\end{multline*}
\vspace{-0.7cm} 

This choice is motivated by the fact that the $\ell_1$-penalty is known to provide nested sequences of clusters \cite{hocking2011clusterpath}, while the $\ell_{21}$-penalty allows the recovery of a more stable solution. Total variation distances have also been shown to encourage the recovery of piecewise linear functions and to provide solutions with sharp edge contrasts \cite{chambolle2016geometric} ---  a desirable property in our setting, since this amounts to ``clamping" the centroids together as they progressively coalesce.

To ease notation, for any square matrix $M$, we denote as $\delta^{(M)} \in \R^{N \times N^2}$ the $N \times N^2$-dimensional matrix of pairwise differences such that: $ \forall i, j, k \leq N, \quad   \delta^{(M)}_{k,(i,j)}=  (\mathbf{e}_{ki} -\mathbf{e}_{kj}) M_{ij}$, where $\mathbf{e}_{ki} = \mathbbm{1}_{k=i}$ is the $i^{th}$ cartesian column-basis vector. The final constrained minimization problem can thus be compactly written as:
\vspace{-0.3cm} 
\begin{multline}  \label{eq:sim_CC_fin} 
\underset{ \pi _\in \Delta_N}{\text{argmin }} \Big\{ \text{Trace} \big[ \pi^T K \pi  -2 K  \pi\big] \\+ \lambda  \Big(   \alpha || \pi \delta^{(K)} ||_{2,1}+(1-\alpha) ||\pi \delta^{(K)} ||_1 \Big) \Big\}
\end{multline}
\vspace{-0.3cm}

As for its Euclidean counterpart, the solution of  Eq. \ref{eq:sim_CC_fin} is consistent with its interpretation as a cluster coalescence path:
\begin{itemize}
\itemsep0em
	\item when $\lambda=0$, the solution of the previous equation is the identity: $\pi^{(0)}=I_N$. It is easy to check that $I_N$ is a solution to ${\text{argmin}}_{\pi \in [0,1]^N}  \text{Tr}[\pi^TK\pi -2K\pi]$. Since $I_N$ is doubly-stochastic, by strict convexity of the objective in Eq. \ref{eq:sim_CC_fin}, we deduce that it is the solution for $\lambda=0$. 
	\item when $\lambda =\oo$, on the other hand, the solution of Eq. \ref{eq:sim_CC_fin} must be such that $|| \pi \delta^{(K)} ||=0$. This is given by the consensus matrix $\pi^{(\oo)}=\frac{1}{N} \mathbf{1}\mathbf{1}^T$, which is the intersection of the set $\{A\in \R^{N\times N}: \forall i,j,\leq N  \quad A_i =A_j, A\geq 0 \}$ with the set of doubly-stochastic matrices.
\end{itemize}

\xhdr{Discussion} The assumption that $K$ is positive definite is by no means restrictive. Indeed, in many applications (brain connectomes, etc.), the kernel $K$ corresponds to some transformation of a positive definite similarity (typically, to some thresholded-measure of the correlation between vertices). Even if this is not the case, Positive-Definiteness can be achieved by regularizing the kernel: $\hat{K} =K +\gamma I_{n}$. 
 As for many clustering algorithms (choice of the most adequate distance metric in k-means, bandwidth in spectral clustering, etc.), the choice of the appropriate transformation should depend on the analysis.\\
We also highlight that, in contrast to hierarchical clustering, only for the choice $\alpha=0$ is the algorithm proven to output a nested sequence  of clusters\cite{hocking2011clusterpath}. However, we emphasize that the goal of our paper is to extract robust multiscale representations (rather than strictly nested ones): the regularization path allows the recovery of progressively coarser and coarser representations of the data. The strict convexity of the objective in Eq. \ref{eq:sim_CC_fin} ensures its global optimality, which in this case, is potentially  a more desirable quality than nestedness.

\section{Algorithm}
\vspace{-0.1cm}
The main challenge consists in devising an efficient algorithm for solving the previous optimization problem. While this problem is strongly convex, exact solvers are extremely slow, making the computation of the full regularization path almost intractable. In this paper, we propose two methods. The first is based on an adaptation of the Fast Iterative Shrinkage and Thresholding Algorithm \cite{beck2009fastgen}, a method originally proposed by Beck and Teboulle for image deblurring in 2009 \cite{beck2009fast} and which we have selected for both its theoretical efficiency and its empirical performance.
Our contribution here lies in the adaptation of this method to the evermore-challenging setting of Eq.  \ref{eq:sim_CC_fin}, in which the optimization has to be done on the set of doubly stochastic matrices --- a much more constrained and complicated setting than for image processing. We also provide a gradient descent-based implementation, based on a linearization of the objective and more suitable to the analysis of larger graphs, as well as an ADMM-based implementation \cite{admm} for the sake of comparison. 
For the sake of clarity, we outlay in the main text the derivation of FISTA, and leave the derivation of the alternative approaches to \ref{appendix:ADMM} and \ref{appendix:linearized}.

\xhdr{Algorithm} Broadly speaking, FISTA \cite{beck2009fastgen} is an algorithm for efficiently solving optimization problems of the form: $\min_{x} f(x) +g(x)$,  where $g$ is proper convex (but not necessarily smooth, as typically for $\ell_1$ penalties and indicator set functions) and  the subgradients of $f$ are Lipschitz. 
One of the most appealing characteristics of FISTA lies in (a) the absence of any user-defined parameters---making it a completely parameter-free method---and (b) a $1/k^2$-accelerated convergence rate. In the spirit of the algorithm proposed by Beck and Teboulle \cite{beck2009fast} for image denoising and deblurring under total-variation penalty, we propose to solve  our similarity-based convex problem \ref{eq:sim_CC_fin} using FISTA on the dual. The additional challenges that our approach faces with respect to the original method are two-fold: (a) our set of constraints is given by the graph adjacency matrix $K$ and is thus more general than the regular 2D-grid in \cite{beck2009fast}, and (b) we are optimizing over the set of doubly stochastic matrices, thus requiring an efficient projection algorithm.

\xhdr{From primal to dual} As in the previous section, we begin by supposing that the similarity matrix  $K$ is positive semi definite. $K$ factorizes as: $K=\Phi^T\Phi$, where, by writing $K=U \Lambda U^T$ the spectral decomposition of $K$, we have: $\Phi =\Lambda^{1/2}U^T$.
We emphasize that, while we introduce this (potentially computationally expensive) decomposition to highlight the parallel with image deblurring,  we will never have to explicitly compute it.
Eq. \ref{eq:sim_CC_fin} can thus be equivalently re-written as:
\vspace{-0.3cm}
\begin{multline*}
\text{Minimize}_{\pi \in \R^{N \times N}} \frac{1}{2} ||  \Phi \pi -\Phi ||_F^2+\mathbbm{1}_{\pi\in \Delta_N}\\ +\lambda \Big(\a  || \pi \delta^{(K)}||_{2,1}+(1-\a)||  \pi \delta^{(K)}||_1 \Big) \\
\end{multline*} 
\vspace{-1.2cm}

This is akin to an image deblurring problem, with $\pi$ playing the role of the true image, and $\Phi$ the observed image and blurring process. 
 Similarly to Beck and Teboulle, we thus propose to start with the associated image denoising problem, and will generalize to the original deblurring problem in a subsequent step:
  \vspace{-.1cm} 
 \begin{multline}\label{eq:primal_mercer_denoising}
\hspace{-.3cm}  \underset{\pi\in \Delta_N} {\text{Minimize}}\frac{1}{2}||  \pi -\Phi ||_F^2+ \lambda \Big(\a || \pi \delta^{(K)}||_{2,1}+(1-\a) ||  \pi \delta^{(K)}||_1 \Big)
 \end{multline} 
 \vspace{-0.4cm}  
 
\begin{prop}
The dual of Eq. \ref{eq:primal_mercer_denoising} is given by:
\begin{multline}\label{eq:dual}
	\max_{p \in \mathcal{P},q\in \mathcal{Q}} || \Pi_{(\Delta_N)^C}\Big(\Phi -\lambda \big(  \alpha   p\delta_K^T+  (1-\alpha)  q\delta_K^T \big) \Big)||_F^2\\ -||  \Phi -\lambda \big(  \alpha   p\delta_K^T+  (1-\alpha)  q\delta_K^T  ||_F^2
\end{multline}
where we denote as $\Pi_{\Delta_N}$ the orthogonal projection operator on the set $\Delta_N$ and  $\Pi_{\Delta_N^C}=I-\Pi_{\Delta_N}$ the projection onto its complement, and where the sets $\mathcal{P}$ and  $\mathcal{Q}$ are respectively the $\ell_2$- sphere and the unit cube in $\mathbb{R}^N$:
\begin{equation*}
\begin{split}
\mathcal{P}&=\big\{p \in \R^{N \times N^2} : \pt i, j \in [1,N]^2, \q  ||p_{\cdot,ij}||_2 \leq 1   \big \}\\
\text{ and   } \mathcal{Q}&=\big\{q \in \R^{N \times N^2}  :  \forall i,j \in [1,N]^2, \quad ||q_{\cdot,ij}||_{\oo}\leq 1   \big \}
\end{split}
\end{equation*}
The subgradients associated to this objective are Lipschitz with constant $L=16\lambda^2\max_{i}||K^2_i||_{2}$.
\end{prop}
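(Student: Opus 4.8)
The plan is to obtain Eq.~\ref{eq:dual} by the Fenchel‑dual route of Beck and Teboulle, adapted to the two mixed norms and to the doubly‑stochastic constraint, and then to extract the Lipschitz constant from the chain rule together with the smoothness of the squared distance to $\Delta_N$.

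First I would replace each penalty by its support‑function representation. Since the $\|\cdot\|_{2,1}$ norm (the sum over columns of the column $\ell_2$ norms) has dual norm $Z\mapsto\max_{ij}\|Z_{\cdot,ij}\|_2$, and $\|\cdot\|_1$ has dual norm $\|\cdot\|_\infty$, one has $\|Z\|_{2,1}=\max_{p\in\mathcal P}\langle p,Z\rangle$ and $\|Z\|_1=\max_{q\in\mathcal Q}\langle q,Z\rangle$ with $\mathcal P,\mathcal Q$ exactly the sets in the statement. Substituting $Z=\pi\,\delta^{(K)}$ turns Eq.~\ref{eq:primal_mercer_denoising} into the saddle problem $\min_{\pi\in\Delta_N}\max_{p\in\mathcal P,\,q\in\mathcal Q}\ \tfrac12\|\pi-\Phi\|_F^2+\lambda\langle\alpha p+(1-\alpha)q,\ \pi\,\delta^{(K)}\rangle$; the inner function is strictly convex in $\pi$ and linear (hence concave) in $(p,q)$, while $\Delta_N$ (the Birkhoff polytope), $\mathcal P$ and $\mathcal Q$ are convex and compact, so a minimax theorem lets me exchange the min and max with no duality gap. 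I would then do the inner minimization explicitly: with the adjoint identity $\langle r,\pi\,\delta^{(K)}\rangle=\langle r\,(\delta^{(K)})^{T},\pi\rangle$ and $r=\alpha p+(1-\alpha)q$, the $\pi$‑problem reads $\min_{\pi\in\Delta_N}\tfrac12\|\pi-(\Phi-v)\|_F^2+\langle\Phi,v\rangle-\tfrac12\|v\|_F^2$, with $v=\lambda\big(\alpha p(\delta^{(K)})^{T}+(1-\alpha)q(\delta^{(K)})^{T}\big)$; completing the square, the minimizer is $\Pi_{\Delta_N}(\Phi-v)$ and the value is $\tfrac12\|\Pi_{\Delta_N^{C}}(\Phi-v)\|_F^2+\langle\Phi,v\rangle-\tfrac12\|v\|_F^2$. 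Using $\langle\Phi,v\rangle-\tfrac12\|v\|_F^2=-\tfrac12\|\Phi-v\|_F^2+\tfrac12\|\Phi\|_F^2$ and discarding the additive constant $\tfrac12\|\Phi\|_F^2$ and the positive global factor $\tfrac12$ (neither affects the maximizer) recovers Eq.~\ref{eq:dual}.

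For the Lipschitz bound I would write the dual objective as $h=2\tilde g$ with $\tilde g(p,q)=\tfrac12\,d(w,\Delta_N)^2-\tfrac12\|w\|_F^2$ and $w=\Phi-\mathcal L(p,q)$, where $\mathcal L$ is the linear map $(p,q)\mapsto\lambda\big(\alpha p(\delta^{(K)})^{T}+(1-\alpha)q(\delta^{(K)})^{T}\big)$. The Moreau envelope $w\mapsto\tfrac12 d(w,\Delta_N)^2$ is $C^1$ with gradient $w-\Pi_{\Delta_N}(w)$, so $\nabla_w\tilde g=-\Pi_{\Delta_N}(w)$ and, by the chain rule, $\nabla h(p,q)=2\,\mathcal L^{*}\,\Pi_{\Delta_N}\big(\Phi-\mathcal L(p,q)\big)$. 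Hence, for any two feasible points, $\|\nabla h(p_1,q_1)-\nabla h(p_2,q_2)\|_F\le 2\|\mathcal L\|_{\mathrm{op}}\,\|\Pi_{\Delta_N}(w_1)-\Pi_{\Delta_N}(w_2)\|_F\le 2\|\mathcal L\|_{\mathrm{op}}\,\|w_1-w_2\|_F\le 2\|\mathcal L\|_{\mathrm{op}}^{2}\,\|(p_1,q_1)-(p_2,q_2)\|$, using $\|\mathcal L^{*}\|_{\mathrm{op}}=\|\mathcal L\|_{\mathrm{op}}$, nonexpansiveness of the projection onto a convex set, and linearity of $\mathcal L$, so $L=2\|\mathcal L\|_{\mathrm{op}}^{2}$. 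Since $\alpha,1-\alpha\in[0,1]$ and $\|x(\delta^{(K)})^{T}\|_2\le\|\delta^{(K)}\|_{\mathrm{op}}\|x\|_2$ row by row, $\|\mathcal L(p,q)\|_F\le\sqrt2\,\lambda\,\|\delta^{(K)}\|_{\mathrm{op}}\|(p,q)\|$, hence $L\le 4\lambda^{2}\|\delta^{(K)}\|_{\mathrm{op}}^{2}$; finally a direct Gram computation gives $\delta^{(K)}(\delta^{(K)})^{T}=2\big(\mathrm{diag}((K\circ K)\mathbf 1)-K\circ K\big)$, twice the Laplacian of the weighted graph with edge weights $K_{ij}^{2}$, whose largest eigenvalue is at most twice the maximum weighted degree $\max_i\sum_j K_{ij}^{2}\le\max_i\|K_i^{2}\|_2$, so $\|\delta^{(K)}\|_{\mathrm{op}}^{2}\le 4\max_i\|K_i^{2}\|_2$ and $L\le 16\lambda^{2}\max_i\|K_i^{2}\|_2$.

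The main obstacle will be the index bookkeeping around $\delta^{(K)}$ — transporting the penalty correctly through $\pi\,\delta^{(K)}$, its adjoint, and the support‑function forms of the mixed $(2,1)$ and $\ell_1$ norms without losing the column grouping — together with the explicit Gram matrix $\delta^{(K)}(\delta^{(K)})^{T}$ and its spectral bound via the weighted graph Laplacian. The remaining ingredients ($C^1$‑smoothness of $\tfrac12 d(\cdot,\Delta_N)^2$ with gradient $I-\Pi_{\Delta_N}$, nonexpansiveness of $\Pi_{\Delta_N}$, and the minimax exchange) are standard and should go through routinely.
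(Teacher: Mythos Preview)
Your argument is correct and mirrors the paper's: dualize the two norms via their support functions over $\mathcal P,\mathcal Q$, swap the min--max, complete the square to recognize the inner minimizer as $\Pi_{\Delta_N}(\Phi-v)$, and then obtain the Lipschitz constant from the chain rule together with nonexpansiveness of $\Pi_{\Delta_N}$. The only point of departure is how the $\delta^{(K)}$ factor is controlled: the paper proves the bound $\|M\delta_K\|_F^2\le 4\max_i\|K_{i}\|_2^2\,\|M\|_F^2$ directly from the elementary inequality $\|M_i-M_j\|^2\le 2(\|M_i\|^2+\|M_j\|^2)$, whereas you identify $\delta^{(K)}(\delta^{(K)})^{T}$ as twice the Laplacian of the graph weighted by $K_{ij}^2$ and invoke the spectral bound $\lambda_{\max}\le 2d_{\max}$; the two routes give the same constant, yours being a slightly more structural repackaging of the same estimate.
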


\begin{proof}We begin by observing that:
 \vspace{-0.5cm} 
 
	\begin{equation*}\label{eq:dual_eq_expression}
\hspace{-0.4cm}	\max_{ p \in  \mathbb{R}^N: ||p||_2 \leq 1} p^Tx =\sqrt{\sum_{i=1}^n x_i^2}
 \text{ and } \max_{ q \in \mathbb{R}^N: ||q||_{\oo } \leq 1} q^Tx =|| x||_1. \end{equation*}
The derivation of these observations is quite simple and given in \ref{appendix:further_proofs} of the extended version of this paper. This allows Eq.\ref{eq:sim_CC_fin} to be re-written as:
 \vspace{-0.6cm} 
 
\begin{multline*}
 \min_{\pi  \in \Delta_N} || \pi-\Phi||_{F}^2+2 \lambda \max_{p  \in \mathcal{P},q \in \mathcal{Q}}\text{Trace}\Big(  \alpha  p^T\pi \delta_K+  (1-\alpha) q^T\pi \delta_K\Big)
\end{multline*}
The corresponding dual problem  $h(p,q)$ is thus given by:
\vspace{-0.5cm}

\hspace{-1cm} \begin{equation*}\label{eq:dual}
\begin{split}
\hspace{-0.cm} &	\max_{p \in \mathcal{P},q\in \mathcal{Q}} \min_{\pi  \in \Delta_N} || \pi-\Phi||_{F}^2+2 \lambda \text{Trace}\Big(  \alpha \delta_K  p^T\pi +  (1-\alpha)\delta_K  q^T\pi \Big)\\
\hspace{-0.cm}  &=	\max_{p \in \mathcal{P},q\in \mathcal{Q}} \min_{ \pi \in \Delta_N} ||\pi - \Big(\Phi-\lambda \big(  \alpha  p\delta_K^T+  (1-\alpha)  q\delta_K^T\big) \Big) ||_{F}^2\\
\hspace{-0.cm} &- ||\Phi -\lambda \Big(  \alpha  p\delta_K^T+  (1-\alpha)  q\delta_K^T\Big)||_F^2\\
	\end{split}
\end{equation*}

The inner expression here is minimized by the projection  of $\Phi -\lambda \Big(  \alpha  p\delta_K^T+  (1-\alpha)  q\delta_K^T \Big)$ onto the set $\Delta_N$, allowing an explicit formulation of the dual as $\max_{p \in \mathcal{P},q\in \mathcal{Q}}  h(p,q)$, with:
\begin{multline*} h(p,q) =|| \Pi_{\Delta_N^C}\Big(\Phi -\lambda \big(  \alpha  p\delta_K^T+  (1-\alpha)  q\delta_K^T  \big) \Big)||_F^2\\-||  \Phi -\lambda \big(  \alpha  p\delta_K^T+  (1-\alpha)  q\delta_K^T \big) ||_F^2
\end{multline*}
which concludes the first part of the proof.

We now have to prove that the subgradients of the dual $h(p,q)$ are Lipschitz. Taking derivatives with respect to $p$ and $q$, we can show that $h$ is Lipschitz with constant:
$$ L(h) = 16 \lambda^2  \max[\a^2,(1-\a)^2] \ \times (\max_{i}||K_i||^2_{2}) $$
We defer the proof to  \ref{appendix:lipschitz} $\qedsymbol$\\
\end{proof}

\noindent This proposition lays the grounds for using accelerated ascent algorithms such as FISTA on the dual: given that we have shown that the subgradients of dual are Lipschitz, FISTA ensures to solve the objective with a convergence rate in $O(\frac{1}{k^2})$, where $k$ denotes the number of iterations \cite{beck2009fastgen}. 
However, as for image deblurring, our setting is further complicated by the presence of the  ``blurring" matrix $\Phi$. While we have assumed here $K$ to be positive definite and could potentially solve exactly the projection update(*), this update would in particular require the inversion of the operator $\Phi^T\Phi=K$ --- a costly operation that does not transfer well in the case where $K$ is nearly singular. Instead, we adopt the approximate strategy of Beck and Teboulle, and view it as a rough equivalent to their deblurring problem. Denoting the solution of the denoising problem by $D(\Phi, \lambda)$, the authors show the optimal solution of the deblurring problem can be obtained by iteratively solving:
$$D( Y -\frac{2}{L}\Phi^T( \Phi \pi - \Phi)), \frac{2\lambda}{L}) =D( Y -\frac{2}{L}( K \pi - K )), \frac{2\lambda}{L}) $$
Empirical results (section \ref{sec:real_exp}) validate our approach.

The FISTA updates of the dual variables are described in Algorithm \ref{alg:FISTA_dual}.

\begin{algorithm}[h]\label{alg:FISTA_dual}
\begin{algorithmic}
	  \STATE {\bfseries Input: (fixed) variables  $\pi^0$, $K$}
	   \STATE{\bfseries Output: Denoising problem output }
	    \STATE{\textit{Initialization: $(p,q) =(s_0,r_0) = (\mathbf{0}_{N \times N^2}, \mathbf{0}_{N \times N^2})$}}
	    	\WHILE{not converged}{
	\STATE{$(p_k,q_k) = \Pi_{\mathcal{P},\mathcal{Q}}\Big[   r_k   +\frac{2\lambda (\alpha,1-\a)}{L(h)} \pi_k \delta_K\Big]  $}
			\STATE{Or equivalently: }
			\STATE{$p_k = \Pi_{\mathcal{P}}\Big[   r_k   +\frac{ \alpha}{8 \lambda  \max[\a^2,(1-\a)^2] \ \times (\max_{i}||K_i||^2_{2})} \pi_k \delta_K\Big]  $}
				\STATE{$q_k = \Pi_{\mathcal{Q}}\Big[   s_k   +\frac{ 1-\alpha}{8 \lambda  \max[\a^2,(1-\a)^2] \ \times (\max_{i}||K_i||^2_{2})} \pi_k\delta_K\Big]  $}
				\STATE{$t_{k+1} = \frac{1 + \sqrt{1+ 4 t_k^2}}{2}$}
				\STATE{$(r_{k+1},s_{k+1}) =(p_k, q_k) + \frac{t_k -1}{t_{k+1}} (p_k -p_{k-1}, q_k - q_{k-1})$}
				\STATE{$\pi_{k+1} =\Pi_{\Delta_N}\big[\pi^k -\frac{2}{L}(K \pi^k -K) - (\a \lambda  r_k \delta_K^T+(1-\a) \lambda  s_k \delta_K^T )  \big] \delta_K\Big] $}
				
	}
	\ENDWHILE
\end{algorithmic}
	\caption{Update for $\pi$}\label{alg:FISTA_dual}
\end{algorithm}

\begin{prop}
	The projection operators onto the sets  $\mathcal{P}, \mathcal{Q}$ are given by:\\
	$\bullet \quad \Pi_{\mathcal{P}}[p] =\frac{ p_{k}}{\max[1,||p||_2]}$\\
	$\bullet \quad \Pi_{\mathcal{Q}}[q] =\frac{ q_{k}}{\max[1,|q_k|]}$\\
\end{prop}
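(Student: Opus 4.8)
The plan is to exploit the product structure of both constraint sets. Observe that $\mathcal{P}=\prod_{(i,j)} B_2$ and $\mathcal{Q}=\prod_{(i,j)} B_\infty$, where $B_2=\{v\in\R^N:\|v\|_2\le 1\}$ and $B_\infty=\{v\in\R^N:\|v\|_\infty\le 1\}$ are, respectively, the unit Euclidean ball and the unit $\ell_\infty$-ball in $\R^N$, and the product runs over the $N^2$ column indices $(i,j)$. Both are nonempty closed convex sets, so the Euclidean (Frobenius) projection onto each exists and is unique. First I would observe that, since $\|M\|_F^2=\sum_{(i,j)}\|M_{\cdot,ij}\|_2^2$, the minimization defining the projection decouples across the columns indexed by $(i,j)$; hence $\Pi_{\mathcal{P}}$ (resp.\ $\Pi_{\mathcal{Q}}$) acts column-by-column as the projection onto $B_2$ (resp.\ $B_\infty$), and it suffices to treat a single vector $v\in\R^N$.

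For $B_2$: if $\|v\|_2\le 1$ then $v\in B_2$ and the projection is $v$ itself. If $\|v\|_2>1$ I would argue the constraint is active at the optimum and apply the first-order optimality condition for $\min_{\|v'\|_2\le 1}\|v'-v\|_2^2$ (equivalently, a Lagrange multiplier $\mu$ with stationarity $(1+\mu)v'=v$ and complementary slackness): the minimizer is $v'=v/(1+\mu)$ with $\mu>0$, and imposing $\|v'\|_2=1$ forces $1+\mu=\|v\|_2$, i.e.\ $v'=v/\|v\|_2$. Combining the two cases gives $\Pi_{B_2}(v)=v/\max(1,\|v\|_2)$, which column-wise is the stated formula for $\Pi_{\mathcal{P}}$.

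For $B_\infty$: the objective $\|v'-v\|_2^2=\sum_{m=1}^N(v'_m-v_m)^2$ is separable and the feasible set $\{|v'_m|\le 1\}$ is a product of intervals, so the projection reduces to $N$ independent scalar problems $\min_{|t|\le 1}(t-v_m)^2$, each solved by clipping, $t^\star=\sign(v_m)\min(|v_m|,1)=v_m/\max(1,|v_m|)$ (checked by the cases $v_m>1$, $v_m<-1$, $|v_m|\le 1$). Stacking coordinates gives $\Pi_{B_\infty}(v)=v/\max(1,|v|)$ with $|\cdot|$ and the quotient taken entrywise, which is the stated formula for $\Pi_{\mathcal{Q}}$.

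There is no serious obstacle here: the statement is a routine consequence of standard projection formulas. The only points that need care are (i) justifying the decoupling over the product structure, which is immediate from additivity of $\|\cdot\|_F^2$ over the column blocks, and (ii) reading the notation of the proposition correctly --- in $\Pi_{\mathcal{Q}}[q]=q_k/\max[1,|q_k|]$ both the absolute value and the division are componentwise, and $p_k,q_k$ denote generic points of the ambient space rather than algorithm iterates. Uniqueness of the projection onto a closed convex set then upgrades ``a minimizer of this form'' to ``\emph{the} projection'', which completes the argument.
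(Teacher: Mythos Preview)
Your proposal is correct. The paper itself does not actually supply a proof of this proposition --- it is stated as a standard fact immediately after Proposition~1 and then used directly in the FISTA updates of Algorithm~\ref{alg:FISTA_dual}. Your argument (decoupling over the product structure of $\mathcal{P}$ and $\mathcal{Q}$ via the additivity of $\|\cdot\|_F^2$, followed by the standard projections onto $B_2$ and $B_\infty$) is the canonical derivation and cleanly fills in what the paper leaves implicit; your remarks on reading the notation of the proposition are also apt.
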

	
In particular, the previous two-step procedure has the advantage of bypassing the need to explicitly compute and invert $\Phi$. In order to efficiently perform the updates on the set of doubly stochastic matrices, we use the scalable iterative scheme proposed in \cite{lu2016fast}, which we detail in Algorithm \ref{alg:FISTA_proj}.

The procedure is summarized in Algorithms \ref{alg:FISTA_dual} and \ref{alg:FISTA_proj}, and a Python implementation is provided on Github\footnote{https://github.com/donnate/HC\_dev}.

\begin{algorithm}
\begin{algorithmic}
	\STATE{$Y $matrix to project onto $\Delta_N$}
	\STATE{$P^*= \text{arg min}_{D \in \Delta_N} || Y-D||_F^2$}
		\STATE{\textit{Initialization: $P\leftarrow Y$;}}
	\WHILE{not converged}{
		\STATE{$P \leftarrow P+ ( \frac{1}{n}I +\frac{\mathbf{1}^TP\mathbf{1}  }{n^2} I -\frac{1}{n} P) \mathbf{1}\mathbf{1}^T -\frac{1}{n}11^TP$}
		\STATE{$P \leftarrow \frac{P+|P|}{2}$}
}
\ENDWHILE
\STATE{Return $\pi^*$ such that $\Phi\pi^*=\Pi_{\Phi \Delta_N}[\Phi-\lambda \mathcal{L}(r_k,s_k)]]$}
\end{algorithmic}
	\caption{Projection onto $\Delta_N$}\label{alg:FISTA_proj}
\end{algorithm}

\vspace{-0.1cm}
\section{Performance Analysis} \label{sec:synthetic_exp}
\xhdr{Computational cost analysis} An inspection of the updates in Algorithm  \ref{alg:FISTA_dual} reveals that only a subset of the coordinates of $p_{ij}$ have to be updated at each iteration--- that is, whenever $K_{ij}=0$, $p_{ij}$ remains identically zero. Denoting $|\mathcal{E}|$ as the number of non-zero entries in $K$, the memory required to store both $p$ and $q$ is thus $O(N|\mathcal{E}|)$.
At each step, the algorithm thus relies on either (a) element-wise operations on matrices of size $O(N|\mathcal{E}|)$  or (b) matrix multiplications with cost at most $O(|\mathcal{E}|N^2)$. As such, the overall complexity and memory storage of the algorithm grows linearly with the number of edges, but quadratically with the number of nodes. We have not as of yet optimized the design of an algorithm capable of efficiently storing a representation of $K$ and leave this to future work (see discussion in the conclusion).\\

\xhdr{Validation of the empirical efficiency}
We begin by assessing the efficiency of our algorithm through a set of synthetic experiments. We generate a synthetic random graph with 3-level fractal structure: the coarsest level corresponds to an Erd{\H o}s-R\'enyi  graph on 4 ``meta nodes". Each of these meta nodes can be further divided in a set of communities on 7 ``super nodes", each corresponding to a dense clique on 7 nodes. This graph generation process yields a graph with 2 levels of clustering (i.e, levels of resolution): a coarse one at the meta level (4 clusters) and a fine-grain one at the super-node level (28 clusters). Figure \ref{fig:hierarchy}(A) illustrates the generation process.  Our convex clustering objective in Eq. \ref{eq:sim_CC} makes it particularly amenable to classification: the columns of the recovered matrix $\pi(\lambda)$ provide a representation of the centroids using the observations as dictionary -- allowing to use any off-the-shelf machine learning algorithm to analyze these representations. We run our hierarchical method on the regularized 2-hop adjacency matrix of the induced graph\footnote{A more detailed explanation is provided in the extended version of this paper: \url{https://arxiv.org/abs/1911.03417}. }  ($K=\text{D}^{-1/2} A^2\text{D}^{-1/2}$ with $D =\text{Diag}( A^2 \mathbf{1})$) and assess the results both visually through the associated PCA plots (Figure \ref{fig:hierarchy}) and quantitatively by running $k$-means on the recovered centroids $\pi(\lambda)$. \\
\textbf{Performance Metrics.} We quantify the amount of structure recovered at each regularization level $\lambda$ through:
\begin{itemize}
\setlength\itemsep{0em}
\item \textbf{the effective rank $er(\lambda)$ \cite{roy2007effective} of the similarity between centroids:} letting $D_{\pi}$ be the distance matrix between observations (i.e., $D_{\pi}[i,j] =  \pi_i^TK\pi_j$) and $\{ \sigma^{(D_{\pi})}\}_j$ its eigenvalues, the effective rank is defined as: $$er(\pi) = \exp\{ -\sum_{k=1}^N \frac{\sigma^{(D_{\pi})}_k }{\sum_{j=1}^N\sigma^{(D_{\pi})}_j}\log\big( \frac{\sigma^{(D_{\pi})}_k }{\sum_{j=1}^N\sigma^{(D_{\pi})}_j}\big)\}.$$  This measures the entropy of the eigenvalue distribution of the similarities between centroids, and should progressively decrease from $N$ to $1$ as $\lambda$ increases. 
\item \textbf{the $k$-means cluster accuracy and silhouette score.} We run $k$-means on the centroids for respectively 4 and 28 clusters, and assess the accuracy of the recovered multilevel clustering: a high accuracy and silhouette score indicate that the convex clustering algorithm has successfully recovered the multi scale structure of the data.
\end{itemize}

\noindent All the results that we show here are averaged over 20 different random graphs. Figure \ref{fig:hierarchy} shows the progressive coalescence of the centroids as $\lambda$ increases. We note the consistency of the recovered cluster path with hierarchical clustering: for very small values of $\lambda$, each cluster contains only one node, and the centroids representations progressively merge as $\lambda$ increases. This can be further quantified by computing the effective rank (Fig. \ref{fig:results}B), which progressively dwindles with the increase of the regularization penalty. Note that, as indicated above, this decline is not strictly monotonic.  This behavior is further quantified in Table \ref{tab:kmeans}, where it becomes apparent that the different coarsened graph representations induced by $\lambda$ recover different levels of resolution: the accuracy both at the meta-community (4 clusters) and super-node (28 clusters) level is extremely high. Yet, as $\lambda$ increases and the centroids progressively fuse, the silhouette score decreases. In particular, Fig.  \ref{fig:results} (C,D)  show that the optimal silhouette score for clustering at the fine and coarse levels shifts from$\lambda^*_{\text{fine}}\approx 1$ to  $\lambda^*_{\text{coarse}}\approx e^{6}$ (peak of the curve).

\begin{figure}[h!]
    \centering
    \includegraphics[width=0.5\textwidth, height=6cm]{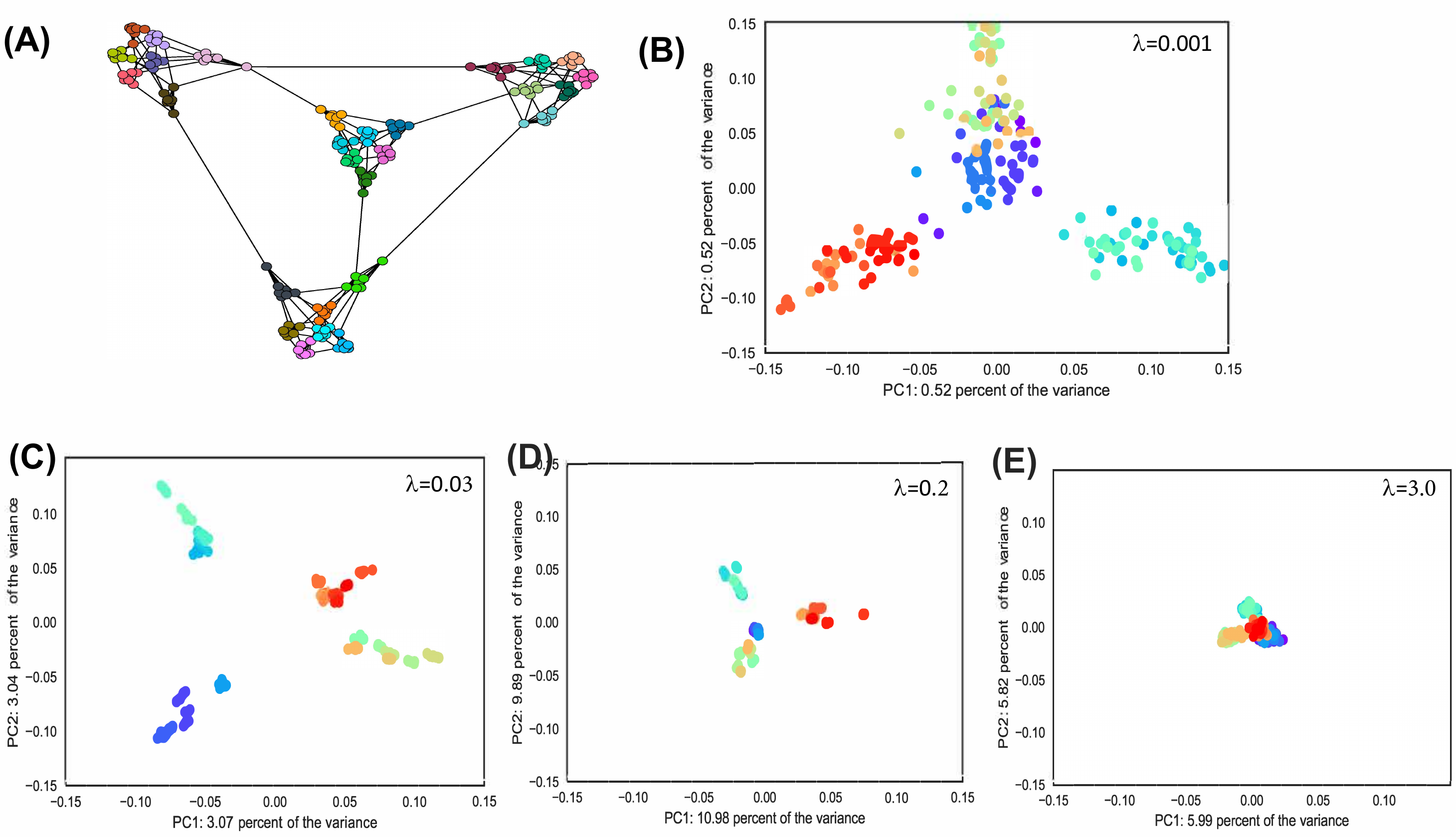}
    \caption{Application of the convex hierarchical clustering algorithm to a synthetic graph on 196 nodes. PCA representation of the nodes for  \textbf{(B)}  $\lambda=0.001$, \textbf{(C)} $\lambda=0.03$, \textbf{(D)} $\lambda=0.2$ and  \textbf{(E) $\lambda=3.0$}.}
    \label{fig:hierarchy}
\end{figure}

\begin{figure}[h!]
    \centering
    \includegraphics[width=0.5\textwidth, height=6cm]{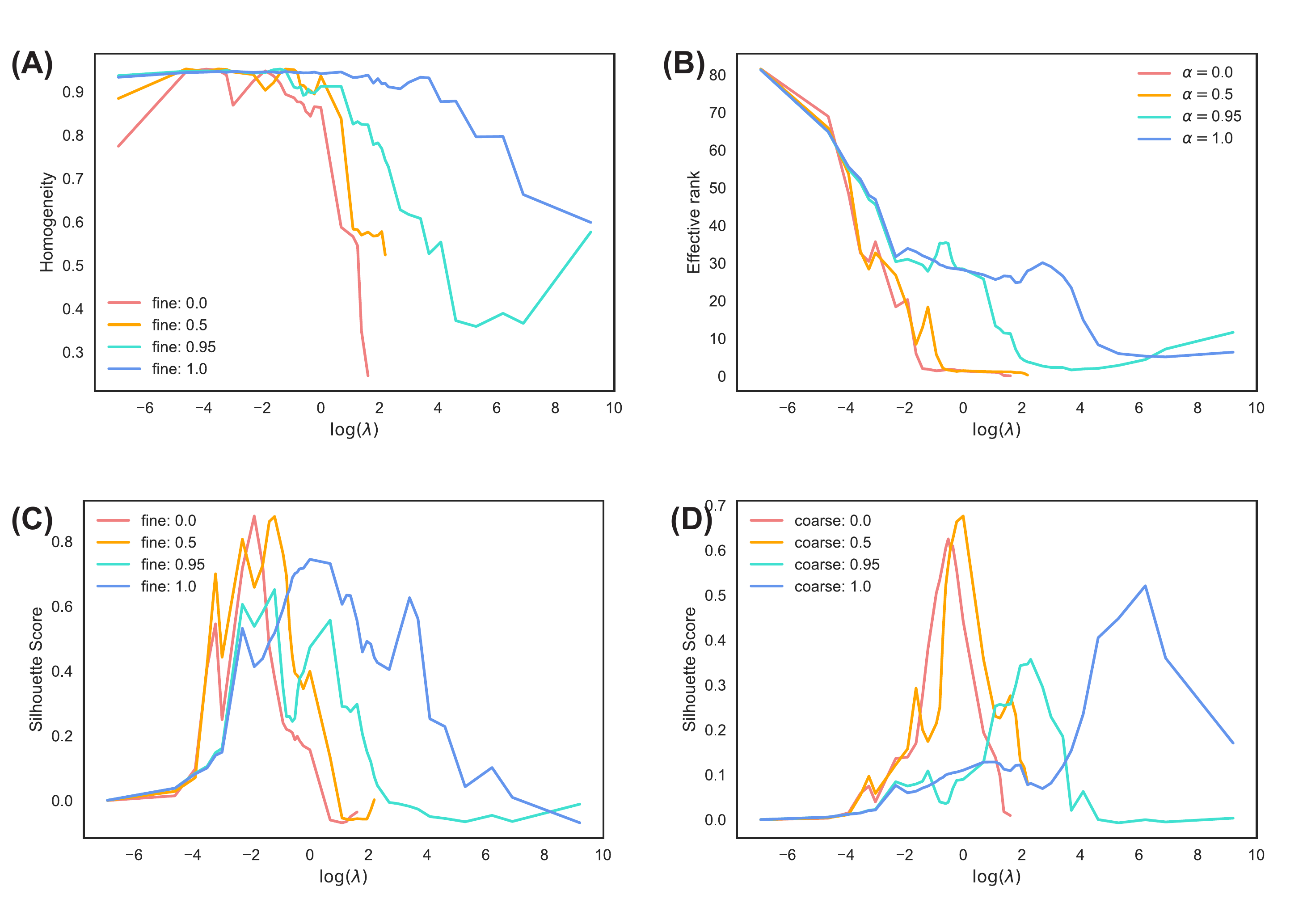}
    \caption{Results of the algorithm averaged over 20 independent trials: (A) efficient rank, (B) homogeneity of the clustering on 28 clusters, (C,D) silhouette scores on respectively 28 and 4 clusters for different values of $\alpha$.} \label{fig:results}
\end{figure}

\begin{table*}[h!]
\centering
  \begin{tabular}{|c|c|c|c|c|c|c|c|}
    \hline
      & &\multicolumn{3}{c|}{ \cellcolor{gray!25} \bf 28 classes} &\multicolumn{3}{c|}{ \cellcolor{gray!25} \bf 4 classes} \\ \hline
   \cellcolor{gray!05}  $\lambda$ &{\small  \cellcolor{gray!05}  $er(\lambda)$}  & {  \cellcolor{gray!05} \small Accuracy}&{\small  \cellcolor{gray!05}  Completeness}& {\small  \cellcolor{gray!05}  Silhouette }  & {  \cellcolor{gray!05} \small Accuracy} &{\small  \cellcolor{gray!05}  Completeness}& {\small  \cellcolor{gray!05}  Silhouette }\\ \hline
    \cellcolor{gray!05}  0.001 & 81.3 & 0.86 & 0.94 & $8.1e^{-4}$ &  0.95 & 0.94 & $1.3e^{-4}$ \\ \hline
  \cellcolor{gray!05}   0.1 & 30.4 &0.89 & 0.94 & $6.1e^{-1}$& 0.95 & 0.95 &$8.5e^{-2}$  \\ \hline
    \cellcolor{gray!05} 0.5 & 35.2 & 0.70 &0.91 &$2.6e^{-1}$ & 0.95 &0.94 &$3.8e^{-2}$ \\ \hline
    \cellcolor{gray!05} 1 &  28.45 & 0.65 &0.91 &$4.7e^{-1}$ & 0.95 &0.95 &$8.9e^{-2}$  \\ \hline
    \cellcolor{gray!05} 40 & 1.64&  0.29 & 0.53  &$-2.6e^{-2}$ & 0.87 & 0.74 &$2.1e^{-2}$ \\ \hline
  \end{tabular}
    \caption{Performance of  kmeans clustering on the raw embeddings, with respectively 4 or 28 classes as ground truth labels, for $\alpha=0.95$. } \label{tab:kmeans}
\end{table*}

\xhdr{Impact of the choice of $\alpha$} A comparison of the results for different values of $\alpha$ is presented in Fig. \ref{fig:results}. We observe that the behavior of the results is roughly similar, although smaller values of $\lambda$ seem to encourage faster clamping of the centroids and a steeper convergence towards the consensus matrix $\pi^{\infty}=\frac{1}{N} \mathbf{1}\mathbf{1}^T$.

\vspace{-0.1cm}

\section{Real-life Experiments}\label{sec:real_exp}
\vspace{-0.1cm}

Our method was driven by its application to multi-resolution graph analysis. In this setting, a typical goal is to obtain a coarser and coarser approximation of the similarity matrix (progressively fusing the clusters together) in order to capture the underlying structure of the graph at multiple scales.
With this objective in mind,  we now provide a few examples of the performance of our method on two real datasets.\\

\hspace{-0.3cm}\textbf{Connectomics.} In this application, we wish to compare the structural connectomes of healthy individuals, undergoing a longitudinal test-retest Reliability and Dynamical Resting-State fMRI study. The data is a subset of the HNU1 cohort \cite{zuo2014open}. In particular, we focus on the structural connectomes of 5 subjects obtained over the course of 10 distinct scan sessions (three days apart from another)\footnote{The preprocessed structural connectomes are readily available at \url{https://neurodata.io/mri-cloud/}.}. For each connectome, we compute its convex clustering representation for various values of the parameter $\lambda$. This yields a multiscale representation from the raw connectomes, which we then compare. The goal is to assess whether the multiscale representations obtained via Convex Clustering are more  consistent and robust across subjects and scans than the ones obtained via traditional single linkage Hierarchical Clustering (HC). To compare the output of our convex clustering procedure (i.e a set of centroids) and the dendrogram obtained via single linkage HC, we compare the distance matrices between centroids that these output induce (in particular, we use the cophenetic distance \cite{sneath1973numerical} to convert the HC dendrogram into a distance matrix). 

Fig. \ref{fig:connectomics} shows, on the left side, the Kendall rank correlation between these similarity matrices for two values of $\lambda$, as well as their correlation with the cophenetic distance induced by single linkage HC.  Interestingly, both HC and Convex clustering recovered multiscale representations with a strong subject effect, as highlighted by the red blocks along the diagonal: representations corresponding to different scans of the same subject are more alike than scans across subjects. This is highlighted by the column on the right side of Fig. \ref{fig:connectomics}, which shows a clear separation in the distances between scans belonging to the same subject (``within distances") and scans across different subjects (``between"). This effect fades away as the regularization increases. This is consistent with our expectation that the overall organization of the brain is globally the same across subjects, while differences between individuals are more salient at the fine-grain scale.

To quantify the relative performance of our algorithm with standard HC, we estimate the variability of its output: for each scan and each value of $\lambda$, we compute the 5 nearest-neighbor graphs that the coarsened similarity matrices induce. We then compute the distances between these 5 nearest-neighbor graphs (using the Hamming distance, that is, the raw $\ell_2$-distance between adjacency matrices). We observe that the variability in these graph is smaller for convex clustering than for graphs obtained using single linkage HC: the distribution for two values of $\lambda$ are plotted on Fig. \ref{fig:connectomics2}, and we observe that these differences are significantly inferior than to the ones of HC. This indicates that the 5-nearest neighbor graphs recovered by our convex clustering procedure induce more robust and consistent multiscale representations of the connectomes across subjects and scans.

\begin{figure}[!h]
\centering
\begin{subfigure}[b]{0.45\textwidth}
\includegraphics[width=\textwidth, height=4.6cm]{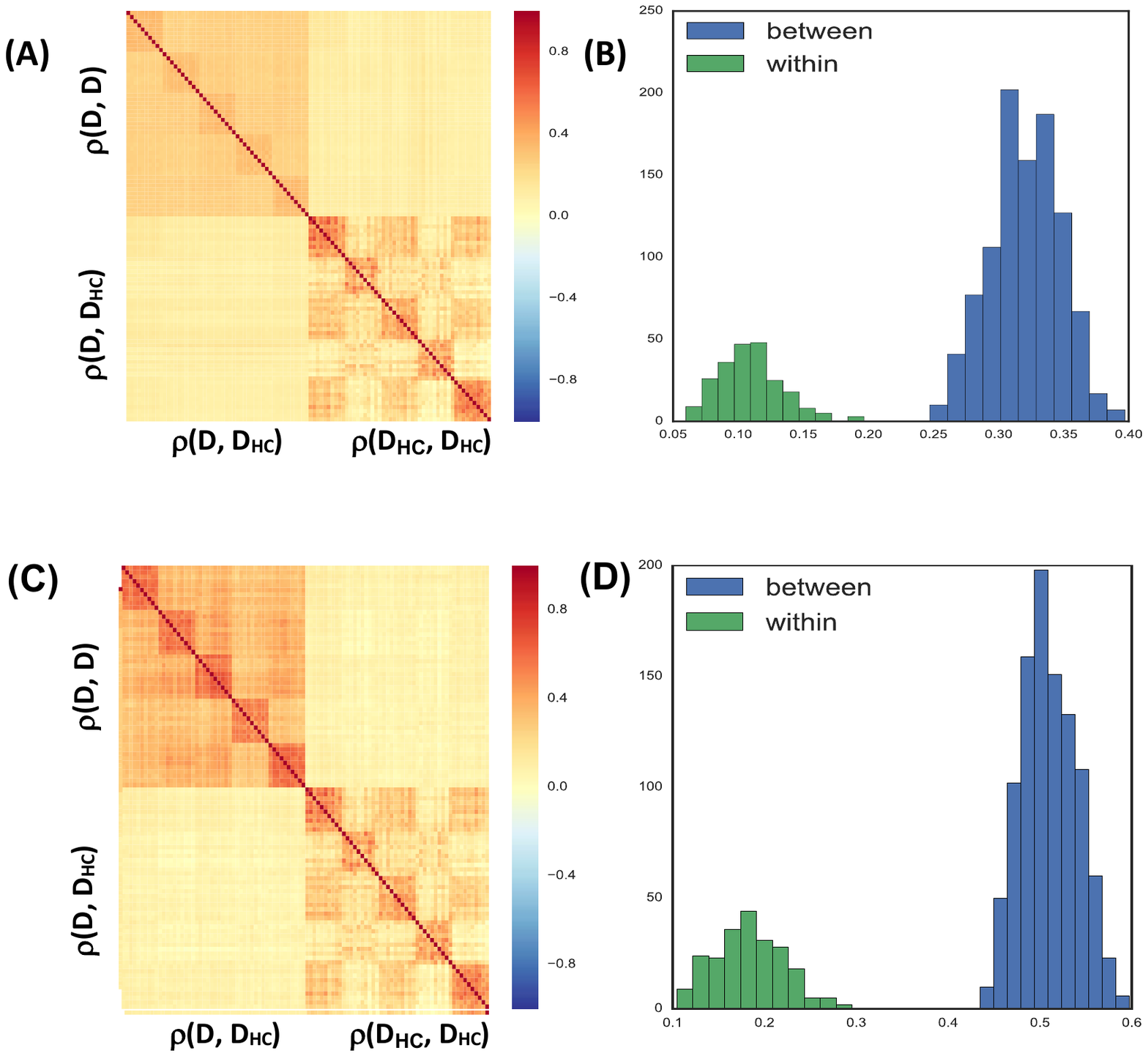}
\caption{Results for the connectomics study using different values of the regularization (A, B: $\lambda=0$ and C,D: $\lambda=0.01$). Left-column: pairwise Kendall rank correlation between coarsened brain networks induced by convex clustering $\rho(D, D)$  and HC's associated cophenetic distance $\rho(D_{HC}, D_{HC})$ as well as Kendall cross-correlation between representations $\rho(D,  D_{HC})$ (matrix sketch on the left of the picture). Right column: distribution of the distances between coarsened DTI scan representations for scans belonging to the same subject ("within") and across different subjects ("between"). }\label{fig:connectomics}
	\end{subfigure}
\begin{subfigure}[b]{0.45\textwidth}
\includegraphics[width=\textwidth, height=5.cm]{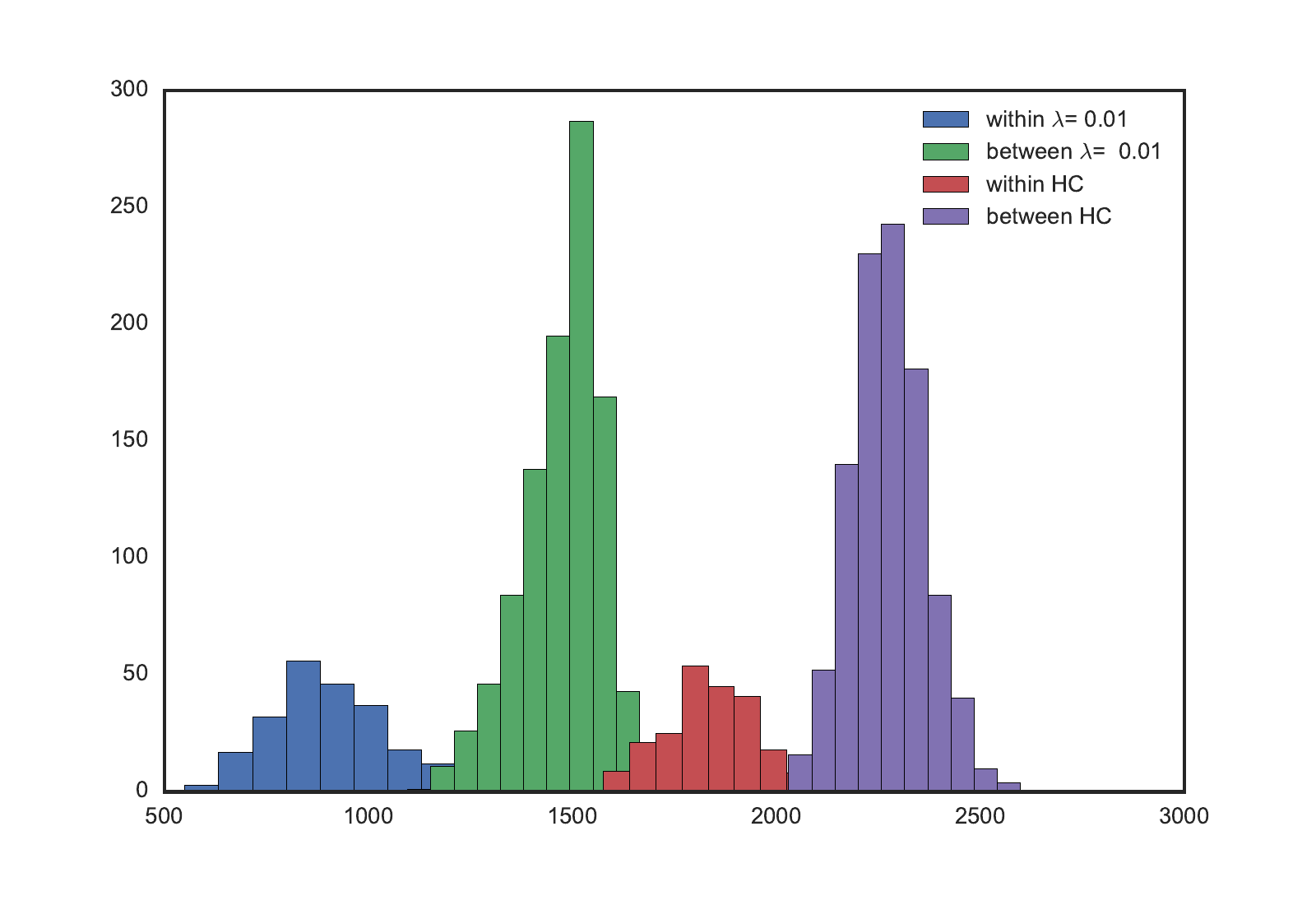}
\caption{Comparison of the "within" and "between" subject distances between coarsened representations of DWI scans for various levels of $\lambda$. }\label{fig:connectomics2}
	\end{subfigure}	
\end{figure}

\textbf{Khan gene expression.} We now demonstrate the robustness of our method by applying it to the Khan dataset \cite{khan2001classification}. This dataset consists of gene expression profiles of four types of cell tumors of childhood. In this case, we want to show that the clusters recovered by our procedure are more robust than those recovered by standard hierarchical clustering, in that the multiscale representation of the similarities between genes that they capture are more reproducible: we split the dataset between training and testing, and assess the similarity between the multi scale representations that we extract out of those.  In this case, a set of 64 arrays and 306 gene expression values are used for training, and 25 arrays for testing. We apply hierarchical clustering on the similarity matrix induced by the data's 10 nearest-neighbor graph and assess the stability of the induced hierarchy: at each level, we aggregate the centroids in both training and testing based on their efficient rank and  compute the clusters' homogeneity score using the training labels as ground truth. We  compare this against standard agglomerative clustering. Interestingly, the results (displayed in the table in Fig. \ref{figurekhan})  indicate a better homogeneity of our method with respect to the greedy one for intermediary values of the regularization, indicating that the more unstable clusters of standard HC's clustering are at the intermediary levels. Fig. \ref{fig:knn_khan} also shows that the distances between train and test k-nn graphs (as in the connectome study) is consistently smaller than for the k-nn graphs induced using HC's cophenetic distance. This indicates greater consistency between test and train results for convex clustering.\\
\vspace{-0.9cm}

	\  \begin{figure}[!ht] \caption{ Results for the Khan Dataset}\label{figurekhan}
		\centering
		\begin{subfigure}[b]{0.5\textwidth}
		\includegraphics[width=7cm, height=4cm]{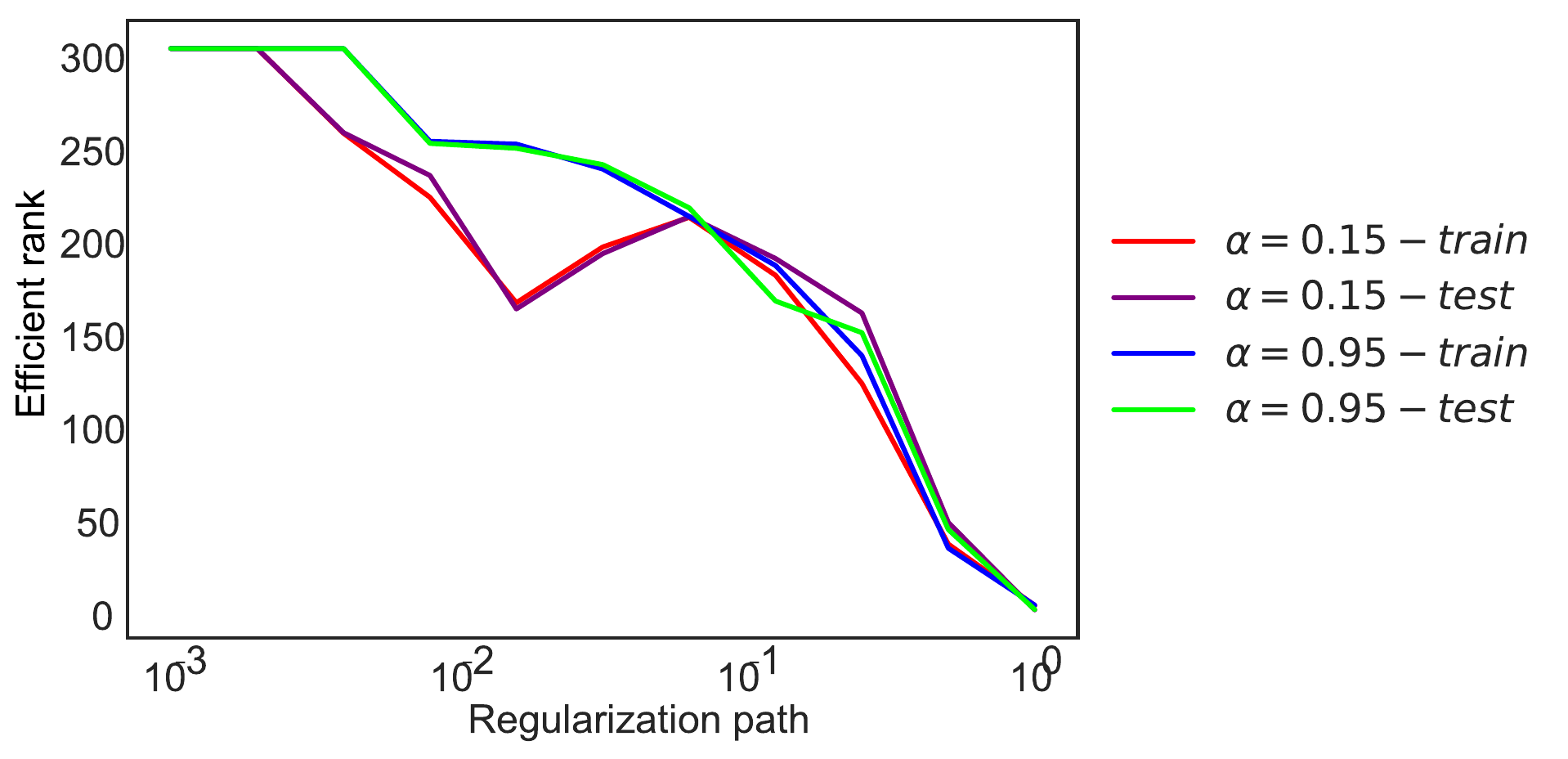}
		\end{subfigure}		\subcaption{Figure: Efficient rank for recovered along the regularization path, on the test and train sets for two values of $\alpha$.}\label{figurekhan_fig}
		\qquad
		\begin{tabular}[b]{|p{0.8cm}|p{1.65cm}|p{2.4cm}|p{2.0cm}|}
			\hline 
			 { ${\lambda}$}	&  { {\bf Effective rk   }  {\scriptsize $er(\pi)$}} &  { \bf  Homogeneity  FISTA} {\scriptsize ($\alpha=0.95$)}&  {  \bf Homogeneity standard HC} \\ 
			\hline 
			0.032& \hspace{0.6cm}  242 & {\hspace{0.6cm}   {\bf0.937}}&  {  \hspace{0.6cm}   0.935}  \\ 
			\hline 
			0.256 &  \hspace{0.6cm} 141 &  \hspace{0.6cm}     \textbf{0.774}  & \hspace{0.6cm}  0.721  \\ 
			\hline 
			0.512 &  \hspace{0.6cm}  37 &   \hspace{0.6cm}  \textbf{0.446 } & \hspace{0.6cm}   0.292  \\ 
			\hline 
			1.024 & \hspace{0.6cm}  7 & \hspace{0.6cm}   \textbf{0.100}  & \hspace{0.6cm}  0.087  \\ 
			\hline 
		\end{tabular}
		\subcaption{ Table: Homogeneity score between test and train predictions at different points of the regularization path.}\label{figurekhan_table}
	\end{figure}
	
		\begin{figure}[h!]
		\centering
		\includegraphics[width=0.45\textwidth, height=3.8cm]{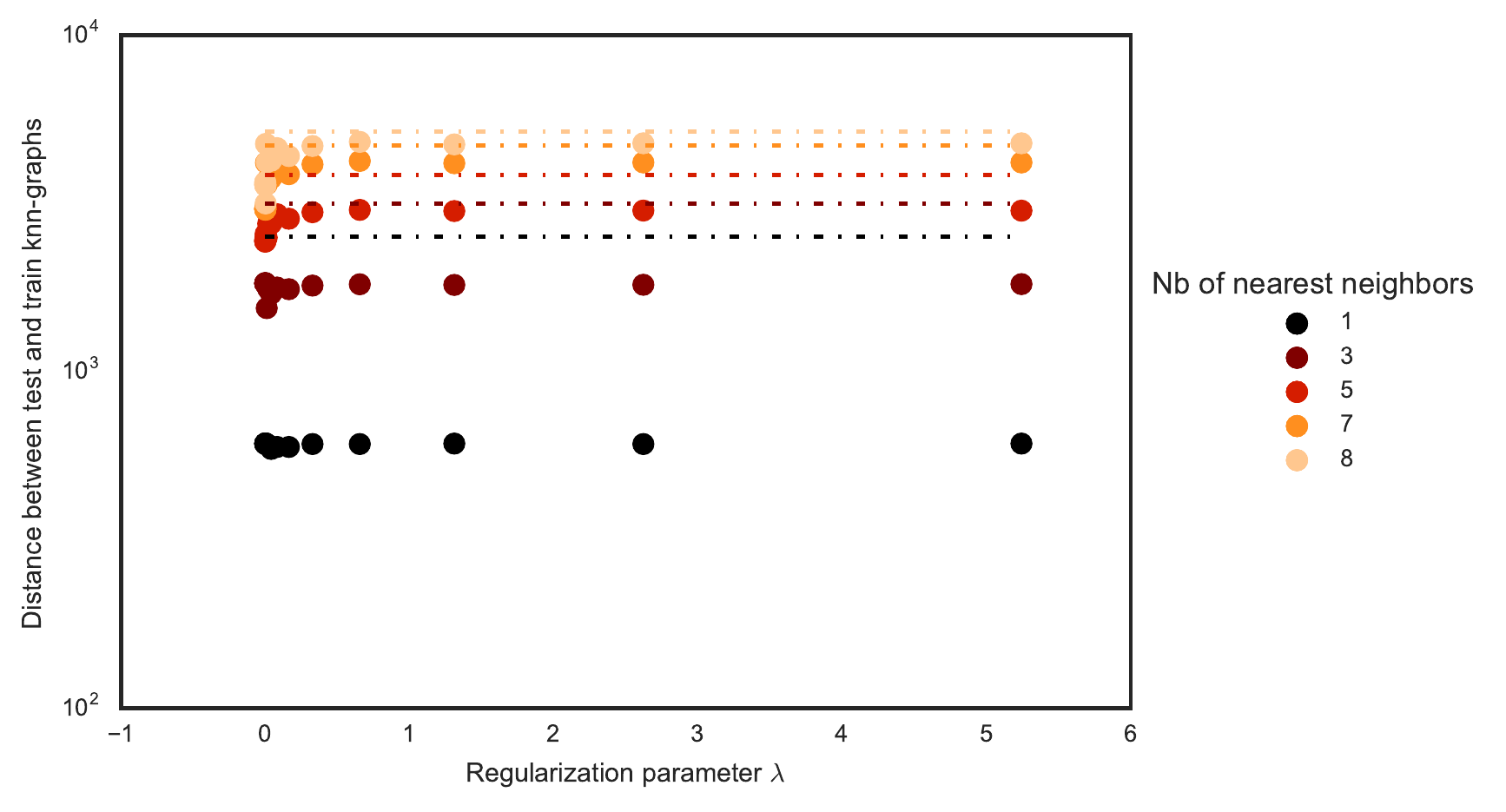}
		\caption{Distances between test and train k-nearest neighbor graphs as the regularization $\lambda$ increases (colored by k). Dashed lines indicate the HC-cophenetic baseline for the number of neighbors considered. }
\label{fig:knn_khan}
\end{figure}

\section{Conclusion}
In conclusion, we have proposed an adaptation of FISTA on the dual for solving convex hierarchical clustering in the case where the data are directly a graph or a similarity matrix. We have shown the performance of our method on both synthetic and real datasets, highlighting its ability to recover different important scales with better consistency and robustness than standard Hierarchical Clustering. To begin working on scaling up this method, we also devised a gradient descent-based implementation, based on a linearization of the objective and more suitable to the analysis of larger graphs, as well as an ADMM version \cite{admm} for the sake of comparison, provided in the extended version of this paper \footnote{An extended version of this paper can be found at: \url{https://arxiv.org/abs/1911.03417}}. One intrinsic limit to the scalability of our method lies in its requirement to store a matrix of size $N^2$ -- an aspect that we leave for future work.  

\xhdr{References}
\vspace{-0.1cm}
{ \tiny
	\bibliographystyle{abbrv}

}
\setcounter{section}{0}
\renewcommand{\thesection}{Appendix \Alph{section}}
\renewcommand{\thesubsection}{\Alph{section}}
\section{Lipschitz Constant}\label{appendix:lipschitz}
\setlength{\abovedisplayskip}{0pt}
\setlength{\belowdisplayskip}{0pt}
\vspace{-0.2cm}

 We here show that the dual problem is Lipschitz with respect to each of the variables $p$ and $q$.
 We have:
 \begin{multline*} h(p,q) =|| \Pi_{\Delta_N^C}\Big(\Phi -\lambda \big(  \alpha  p\delta_K^T+  (1-\alpha)  q\delta_K^T  \big) \Big)||_F^2\\-||  \Phi -\lambda \big(  \alpha  p\delta_K^T+  (1-\alpha)  q\delta_K^T \big) ||_F^2
\end{multline*}
Thus:
 \begin{multline*}
 \nabla_p h(p,q)= -\lambda \a \Big(2\Pi_{\Delta_N^C}[\Phi-\lambda(\a p \delta_K^T +(1-\a) q \delta_K^T)] \\-2 \big( \Phi-\lambda(\a p \delta_K^T +(1-\a) q \delta_K^T)  \big)  \Big)\delta_K \q (*)\\= 2\lambda \a \Pi_{\Delta_N}[\Phi-\lambda(\a p \delta_K^T +(1-\a) q \delta_K^T)] \delta_K
 \end{multline*}
where $(*)$ follows from the fact that $\frac{\partial || \Pi_{\Delta_N^C} [x]||_F^2}{\partial x } =\frac{\partial || \Pi_{\Delta_N^C}[ x]||_F^2}{\partial \Pi_{\Delta_N^C} [x] }\frac{\partial \Pi_{\Delta_N^C} [x]}{\partial x} = 2 \Pi_{\Delta_N^C} [x] $.\\
Similarly:
$$  \nabla_q h(p,q) = 2\lambda (1-\a) \Pi_{\Delta_N}[\Phi-\lambda(\a p \delta_K^T +(1-\a) q \delta_K^T)] \delta_K$$
We note that, by definition of $\delta_K$:
$$ \forall M \in \R^{N \times N^2}, \q ||M\delta_K||_{ij}^2 = K_{ij}^2 || M_i -M_j||_2^2$$
Hence:
\begin{equation}
\begin{split}
\hspace{-0.1cm} & \forall M \in \R^{N \times N^2}, ||M\delta_K||_F^2 \\
&= \sum_{ij} K^2_{ij} || M_i -M_j||^2 \leq \sum_{ij} 2K_{ij}^2 (|| M_i||^2+||M_j||^2)\\
\hspace{-0.1cm} & \leq  \sum_{ij} (2K_{ij}^2 || M_i||^2+2 K_{ji}^2||M_j||^2)  \hspace{0.1cm} \text{ \footnotesize by symmetry of $K$}\\
\hspace{-0.1cm} & \leq 4 \sum_{i } || K_{i, \cdot}||_2^2 || M_i||^2 \leq 4  \max_{i}\{ || K_{i \cdot }||^2_2 \} \times ||M||_F^2\\
\end{split}
\end{equation}
Hence, using the non-expensiveness property of the orthogonal projection operator, we can show that the subgradients of $h$ are Lipschitz, since:
\begin{equation}
\begin{split}
\hspace{-.4cm}&||\nabla h(p_1,q_1)-\nabla h(p_2,q_2)||_F^2\\
&=||\nabla_p h(p_1,q_1)-\nabla_p h(p_2,q_2)||_F^2 \\
&+||\nabla_q h(p_1,q_1)-\nabla_q h(p_2,q_2)||_F^2\\
\leq  &8  \lambda^2  \max[\a^2,(1-\a)^2] \times 4 \times \max_{i}||K_{i \cdot}||^2_{2}\\
& \times||  \Pi_{\Delta_N}\Big(\Phi -\lambda \big(  \alpha p_1\delta_K^T+  (1-\alpha)  q_1\delta_K^T\big) \Big)\\
&-\Pi_{\Delta_N}\Big(\Phi -\lambda \big(  \alpha  p_2\delta_K^T+  (1-\alpha)  q_2\delta_K^T\big) \Big)||_F^2\\
\leq& 32 \lambda^4  \max[\a^2,(1-\a)^2] \ \times (\max_{i}||K_{i \cdot}||^2_{2}) \\
& \times ||  \Big( \alpha  (p_1-p_2)+  (1-\alpha)  (q_1-q_2) \Big) \delta_K^T||_F^2\\
\end{split}
\end{equation}
\begin{equation}
\begin{split}
&||\nabla h(p_1,q_1)-\nabla h(p_2,q_2)||_F^2\\
 \leq & 128\lambda^4   \max[\a^2,(1-\a)^2]  \ \times (\max_{i}||K_{i \cdot}||^2_{2}) \\
 & \times||  \big(\alpha  (p_1-p_2)+  (1-\alpha)  (q_1-q_2)\big)||^2_F\\
\leq& 128\lambda^4   \max[\a^4,(1-\a)^4]  \times (\max_{i}||K_{i \cdot}||^2_{2})\\
&  \times 2(||  p_1-p_2||^2_F+  ||q_1-q_2)\big)||^2_F)\\
\leq& 256 \lambda^4  \max[\a^4,(1-\a)^4] (\max_{i}||K_{i \cdot}||^2_{2})\\
&\times  ||(p_1,q_1)-(p_2,q_2)||^2_F
\end{split}
\end{equation}
Thus:
\begin{multline*}
||\nabla h(p_1,q_1)-\nabla h(p_2,q_2)||_F \leq 16 \lambda^2 \times \\ \max[\a^2,(1-\a)^2] \ \times (\max_{i}||K_{i \cdot}||_{2})   ||(p_1,q_1)-(p_2,q_2)||_F
\end{multline*}

\section{More on the derivation of the FISTA updates}\label{appendix:further_proofs}
\setlength{\abovedisplayskip}{0pt}
\setlength{\belowdisplayskip}{0pt}

In this appendix, we provide more in-depth descriptions of the propositions and proofs derived in this manuscript.\\

\xhdr{Derivation of the adapted convex objective problem} We begin by showing how, in the case where the kernel matrix $K$ is assumed to be positive definite, the adaptation of convex clustering proposed in Eq. \ref{eq:sim_CC} naturally follows. To begin with, we remind the reader that, by Mercer's theorem, we can simply write a high-dimensional equivalent formulation $\hat{\mathcal{P}}$ of convex clustering as:
\begin{multline}
\hat{\mathcal{P}}= \min_{\pi _\in \Delta_N} \sum_{i=1}^N || \Phi(X_i) -\sum_{j}\Phi(X_j) \pi_{ji}||^2 \\+\lambda \sum_{i,j}W_{ij}||\sum_{k}\pi_{ki}\Phi(X_k)-\sum_{k}\pi_{k'j}\Phi(X_{k'})||
\end{multline}
This induces the following set of equivalents:
\begin{equation}  
\begin{split}
\hat{\mathcal{P}} \iff &\text{arg min}_{\pi _\in \mathcal{S}} \sum_{i=1}^n \Big( || \Phi(X_i)||^2 +||\Phi(X) \pi_{\cdot,i}||^2 \\
&-2  \Phi(X_i)^T (\Phi(X) \pi_{\cdot,i})\Big) +\lambda \sum_{i,j}W_{ij}\Big(||U_i -U_j|| \Big)\\
\iff &\text{arg min }_{\pi _\in \mathcal{S}} Tr[ \pi^T \Phi(X)^T \Phi(X)\pi ] \\
&-2Tr \Big(\Phi(X)^T  (\Phi(X) \pi )\Big) +\lambda \sum_{i,j}W_{ij}\Big(||U_i -U_j|| \Big) \\
\iff &\text{arg min }_{\pi _\in \mathcal{S}} Tr [ \pi^T \Phi(X)^T \Phi(X)\pi ] \\
&-2Tr \Big(\Phi(X)^T  (\Phi(X) \pi )\Big)  \\
&+\lambda \sum_{i,j}W_{ij}\underbrace{\Big(||\Phi(X) [\pi_{\cdot i} -\pi_{\cdot j}]||}_{\leq L || \pi_{\cdot i} -\pi_{\cdot j} ||} \Big)\\
\implies &\text{arg min }_{\pi _\in \mathcal{S}} Tr [ \pi^T K \pi ]\\
& -2Tr\Big[K  \pi\Big] +\lambda \sum_{i,j}W_{ij}\Big(||\pi_{\cdot i} -\pi_{\cdot j}|| \Big)\\
\end{split}	
\end{equation}
where, in the last line, we have used the fact that the columns of $U$ are in fact the coordinated of the centroids in the dictionary of the original observations -- hence, penalizing the pairwise differences between euclidean representation is equivalent to penalizing the dictionary coordinates. \\

\xhdr{Proof of statement \ref{eq:dual_eq_expression}} 
We here provide a brief proof of the statement in Eq. \ref{eq:dual_eq_expression}:
	\begin{equation*}
	\max_{ p \in  \mathbb{R}^N: ||p||_2 \leq 1} p^Tx =\sqrt{\sum_{i=1}^n x_i^2}
 \text{ and } \max_{ q \in \mathbb{R}^N: ||q||_{\oo } \leq 1} q^Tx =|| x||_1 \end{equation*}
To see this, let us first consider the equality on $p$, and introduce the Lagrangian corresponding to the constraint:
$$ \mathcal{L}(p, \lambda )= -p^T x +\lambda (p^T p -1), \q \lambda \geq 0$$
where the primal is
$ \min_{p} \max_{\lambda \in \R^+}  \mathcal{L}(p, \lambda ),$ and the dual can be written as:
$ \max_{\lambda \in \R^+} \min_{p}  \mathcal{L}(p, \lambda ).$
The latter inner minimization with respect to $p$ is achieved for:
 \begin{multline*}\nabla_p \mathcal{L}(p, \lambda )= -x +2 \lambda p=0 \iff  p=\frac{1}{2\lambda}x, \end{multline*}and the dual problem reduces to:
$$ \max_{\lambda}  -\frac{||x||^2}{2\lambda} -\lambda(  \frac{1}{4\lambda^2} ||x||^2 -1) =\max_{\lambda} - \frac{||x||^2}{4\lambda}+\lambda$$
The latter is achieved for $\lambda= \frac{||x||}{2}$, and thus: $ p=\frac{1}{||x||}x$.
Hence, $\max_{ p \in \R^n, p^Tp \leq 1} [ p^Tx]=||x||_2$, which concludes the proof.\\
Similarly for q, it is easy to check that: $$ ||x||_1 =\max_{s: \q s_i \in \{-1, 1\}} s^Tx.$$
By relaxing the constraint on $s$, we have:
$||x||_1 =\max_{s: \q s_i \in [-1, 1]} s^Tx, $
which concludes the proof.

\section{Derivation of the ADMM updates}\label{appendix:ADMM}

In this appendix, we provide the derivations of the ADMM algorithm used to benchmark  our FISTA-based approach in section \ref{sec:synthetic_exp}.

\subsubsection{Description of the algorithm}

The Alternating Direction Method of Multipliers \cite{admm} is a popular algorithm for solving convex optimization problems with a large number of constraints. Indeed, with a guaranteed speed of convergence in $O(\frac{1}{k})$ iterations, this algorithm has become the work-horse of convex problems with coupling constraints. However, contrary to the parameter-free implementation of convex clustering with FISTA, ADMM requires the selection of the parameter $\rho$, whose choice has been shown to considerably affect the speed of convergence \cite{admm,wahlberg2012admm}.
In what follows, in order to simplify the notations, denoting as $e$ the vectors of the Cartesian basis, we introduce the pairwise-difference matrix $\delta \in \mathbb{R}^{N \times N^2}: \delta_{k,ij}=\mathbf{e}_{k,i}-\mathbf{e}_{k,j}$
Introducing the variables $Z_{ij}=\pi_i -\pi_j$ and dual variables $u_{ij}$, the ADMM-augmented Lagrangian can be written as:
\begin{equation} \label{eq:ADMM}
\begin{split}
\min_{\pi \in \Delta_N}& \frac{1}{2} \text{Tr}(\pi^T K \pi -2K^T \pi) +\frac{\rho}{2} \sum_{ij}||\pi \delta +u_{ij} -Z_{ij}   ||^2\\
&+\lambda \sum_{ij} K_{ij}(\alpha || Z_{ij} ||_1 +1-\alpha)|| Z_{ij}||_2 )\\
& \text{s. t. }  \pi \in \Delta_N, \quad \forall i,j, \quad \pi_i -\pi_j = \pi \delta_{ij}= Z_{ij}  \\
\end{split}
\end{equation} 
The full algorithm and derivation of the updates are provided in the following subsection and the whole procedure is summarized in Alg. \ref{ADMM}, and the corresponding updates are derived in the following paragraphs.
\begin{algorithm}
\begin{algorithmic}
	\STATE{\textbf{Input:}  Similarity matrix $K$, regularization parameter $\lambda$}
	\STATE{\textbf{Output:} Optimal solution $\pi^{(\lambda)}$ }
	\STATE{\textit{Initialization;}   $Z, U=\mathbf{0} \in \mathbb{R}^{N \times N^2}, t=0$}
	\WHILE{not converged}{
		\STATE{$\pi^{t+1}=\text{Update}_{\pi}(Z^t, U^t)$   \COMMENT{explicited in Algorithm \ref{fista4ADMM}}}
		\STATE{$Z^{t+1}\leftarrow  \text{SoftThreshold}_{\frac{\alpha \lambda||Z^t||}{\rho||Z^t||+ (1-\alpha)\lambda}} \Big[ \frac{\pi^{t+1}\delta +U^t}{ (1+\frac{(1-\alpha)\lambda}{\rho||Z^t||})}  \Big] $}
		\STATE{$U^{t+1}\leftarrow U^t +(\pi^{t+1}\delta -Z^{t+1})$}
		\STATE{$t \leftarrow t+1$}
	}
	\ENDWHILE
	\STATE{Return $\pi^*=\Pi_{\Delta_N}(\pi)$}
	\caption{ADMM}
	\label{ADMM}
\end{algorithmic}
\end{algorithm}

\subsubsection{Updates}

\textbf{Updating $\pi$}. The objective in Eq. \ref{eq:ADMM} reads as quadratic linear optimization problem in $\pi$. The updates in $X$ are unfortunately not computable  in closed form. However, provided that we have access to an efficient projection on the set of doubly stochastic matrices $\Delta_N$, we can solve the corresponding update using an accelerated Proximal Descent algorithm. In particular, the gradients  with respect to $\pi$ are given by:
$$ \nabla_{\pi} F(\pi, Z,u) = K \pi -K +\rho (\pi \delta +U-Z)\delta^T $$
We note that these gradients are in particular Lipschitz (with respect to $\pi$, all other variables being fixed):
\begin{equation*}
\begin{split}
\nabla_{\pi} F(\pi_1, Z,u) -\nabla_{\pi} F(\pi_2, Z,u)=K(\pi_1-\pi_2) +\rho(\pi_1 -\pi_2)\delta \delta^T
\end{split}
\end{equation*}
We also have:
 \begin{multline*}
\delta \delta^T = \Big(\sum_{ij} (e_{ki}-e_{kj})(e_{li}-e_{lj}) \Big)_{kl} = 2\Big(\sum_{j}(e_{lk}- e_{lj})\Big)_{kl}\\= 2\Big(n e_{lk}-  e_{ll}\Big)  =2n I -2\mathbf{1}\mathbf{1}^T  
\end{multline*}

\begin{equation}
\begin{split}
\implies & || \delta \delta^T ||^2_F = \text{Trace}[4n^2 I -8n\mathbf{1}\mathbf{1}^T + 4n \mathbf{1}\mathbf{1}^T] \\
\implies & || \delta \delta^T ||^2_F \leq  4n^3 \\
\implies &  || \nabla_{\pi} F(\pi_1, Z,u) -\nabla_{\pi} F(\pi_2, Z,u)||_F \\
& \leq \sqrt{||K||^2_F  +\rho^2 || \delta \delta^T||^2 } || \pi_1-\pi_2||_F  \\
&\leq \sqrt{||K||_F^2  +4\rho^2 n^3 } || \pi_1-\pi_2||_F  \\
\end{split}
\end{equation}
Hence, to solve for $\pi$, we can use an accelerated proximal method (such as FISTA), with constant step-size $L =\sqrt{||K||_F^2  +4\rho^2 n^3 }$. Since, the projection onto $\Delta_N$ does not have a closed form solution either, the literature typically resorts to fixed-point algorithms such as the one proposed in \cite{lu2016fast}, yielding the procedure described in Algorithm \ref{fista4ADMM}.\\

	\begin{algorithm}
	\begin{algorithmic} 
			\STATE{\textbf{Algorithm:} Updates for $\pi^{t}{(\lambda)}$ }
		\STATE{\textbf{Input:}(fixed) variables  $K$, $Z$ and $U$}
		initialization: $t_k=1$\;
		\WHILE{not converged}{
			\STATE{$\pi^k= \Pi^{\Delta_N}_{\text{1-round}} \big(Y_{k}-\frac{1}{\sqrt{||K||_F^2  +4\rho^2 n^3 }}  \nabla_{\pi}F(Y^k, Z^t, U^t) \big)$
			 \text{(Projection $\Pi^{\Delta_N}_{\text{1-round}}$ described in Alg. \ref{proj_deltaN})}}
			\STATE{$t_{k+1} =\frac{1+\sqrt{1+4t_k^2}}{2}$}
			\STATE{$Y_{k+1}\leftarrow \pi_k +\frac{t_k-1}{t_{k+1}}(\pi_k -\pi_{k-1})$}
		}
	        \ENDWHILE
		\caption{Updates for $\pi$.}
		\label{fista4ADMM}
		\end{algorithmic}
	\end{algorithm}
	
		\begin{algorithm}
	\begin{algorithmic} 
		\STATE{\textbf{Objective:} Projection on $\Delta_N$: $\Pi^{\Delta_N}_{\text{1-round}}$} 
		\STATE{\textbf{Input:}square matrix $Y$}
		\STATE{Initialization: $P=Y$}
		\WHILE{not converged}{
		\STATE{$P\leftarrow P+ ( \frac{1}{n}I +\frac{\mathbf{1}^TP\mathbf{1}  }{n^2} I -\frac{1}{n} P) \mathbf{1}\mathbf{1}^T -\frac{1}{n}11^TP$}
		\STATE{$P \leftarrow \frac{P+|P|}{2}$}
		}
		\ENDWHILE
		\STATE{Return $P$}
		\end{algorithmic}
		\caption{One-round of the fixed point iterative algorithm  ($\Pi^{\Delta_N}_{\text{1-round}}$) for projecting unto $\Delta_N$ as proposed in \cite{lu2016fast}.}\label{proj_deltaN}
	\end{algorithm}
	

\textbf{Updating $Z$}. The updates in terms of $Z$ are more explicit, since $Z$ is simply the solution to a denoising problem with an elastic net penalty. 

Taking the gradient with respect to $Z$ yields:
\begin{equation*}
\begin{split}
 &\rho(Z -X\delta -U) +\lambda \alpha \sign{Z} +\lambda (1-\alpha) \frac{Z}{||Z||}=0\\
  \end{split}
\end{equation*}
We solve the later through a set of sequential updates:
\begin{equation*}
\begin{split}
  & (1+\frac{(1-\alpha)\lambda}{\rho || Z^{t-1}||})Z^t = X\delta +U+ \frac{\lambda \alpha}{\rho} \sign{Z^t} \\
 \implies& Z^t = \text{SoftThreshold}_{\frac{\alpha \lambda  || Z^{t-1}||} { \rho || Z^{t-1}||  + (1-\alpha)\lambda}} \Big[ \frac{1}{ (1+\frac{(1-\alpha)\lambda}{\rho  || Z^{t-1}||})}(X\delta +U)  \Big].
 \end{split}
\end{equation*}

\section{Large Scale Computations: derivation of the linearization algorithm}\label{appendix:linearized}
\setlength{\abovedisplayskip}{0pt}
\setlength{\belowdisplayskip}{0pt}
\vspace{-0.2cm}

 We provide here an alternative way of solving the problem using gradient descent, which might be better suited to larger scale problems.
 We begin by reminding that the problem that we are solving has the following form:
 \begin{equation}
 \begin{split}
& \text{argmin}_{\pi} \text{Tr}[\pi^T K \pi - 2 K \pi] + \lambda \sum_{i,j} K_{ij} || \Phi \pi_{\cdot i} - \Phi \pi_{\cdot j}||^2\\
\iff & \text{argmin}_{\pi}   \text{Tr}[\pi^T K \pi - 2 K \pi]  \\
&+ \lambda \sum_{i,j} K_{ij} (|| \Phi \pi_{\cdot i}||^2 + ||\Phi \pi_{\cdot j}||^2 -2  \pi_{\cdot i}^T \phi^T \Phi \pi_{\cdot j})\\
\iff & \text{argmin}_{\pi}   \text{Tr}[\pi^T K \pi - 2 K \pi]  \\
& +2 \lambda\text{Tr} \big[\pi^T K \pi \text{Diag}( \tilde{K} \mathbf{1})\big] - 2   \mathbf{1}^T(\pi^T K \pi \odot \tilde{K})\mathbf{1} \\
 \end{split}
 \end{equation}
 where $\mathbf{1}^T \pi =\mathbf{1} $, and $\tilde{K} = K -\text{diag}(K)$. Let us write $\Delta = \{ \pi \in \mathbb{R}^{n\times n} : \mathbf{1}^T \pi = \mathbf{1} \}$ the space of row-wise stochastic matrices.
To compute a solution, we propose using an iterative algorithm based on a linearization of the previous objective function. Introducing $x$ such that $||x|| \leq \delta$, at each iteration $t$, we update $\pi^t$ as $\pi^{t} = x + \pi^{t-1}$. All we need is thus to solve for the updates $x$ at each iteration. Linearizing the previous equation with respect to $x$ yields:
 \begin{equation} \label{eq:lin}
 \begin{split}
& \text{argmin}_{||x||\leq \delta }   \text{Tr}[2\pi_{t-1}^T K x - 2 K x]  \\
& +2 \lambda \text{Tr} \big[  \text{Diag}( \tilde{K} \mathbf{1}) \pi_{t-1}^T Kx \big] - 2 \lambda   \mathbf{1}^T((\pi_{t-1}^T K x  + x^T K \pi_{t-1} )\odot \tilde{K})\mathbf{1} \\
 \end{split}
 \end{equation}
 such that $||x|| \leq \delta$ and $\pi^t = \pi^{t-1} + x \in \Delta$.
 The gradient with respect to $x$ of the previous objective function is:
 $$ \nabla_x \ell(x, \pi^{t-1}) =  2 K \pi^{t-1} -2K + 2 \lambda    K \pi^{t-1} \text{Diag}( \tilde{K} \mathbf{1}) + 4  \lambda K \pi^{t-1} \tilde{K}.$$
 Denoting as $\Pi_{\mathcal{B}_{\delta}} $ and $\Pi_{\Delta} $ respectively the projections on the ball of radius $\delta$ and on $\Delta$, we can thus use projected gradient descent to solve the previous problem, as described in Alg. \ref{alg:lin}.

 	\begin{algorithm}
	\begin{algorithmic} 
		\STATE{\textbf{Objective:} Solve Eq. \ref{eq:lin}} 
		\STATE{\textbf{Input:} $K, \tilde{K}$ and initialized $\pi^0$}
		\STATE{Initialization: $x=0$}
		\WHILE{not converged}{
		\STATE{$x \leftarrow \Pi_{\mathcal{B}_{\delta}} \big(x - \eta \nabla_x \ell(x, \pi^{t-1})\big)$}
		\STATE{$\pi^{t} \leftarrow  \Pi_{\Delta} (\pi^{t-1} + x)$}
		}
		\ENDWHILE
		\STATE{Return $\pi^t$}
		\end{algorithmic}
		\caption{Linearization algorithm}\label{alg:lin}
	\end{algorithm}
	


\end{document}